\newtheorem{theorem}{Theorem}
\newtheorem{lemma}{Lemma}
\newtheorem{algorithm}{Algorithm}
\newtheorem{remark}{Remark}
\newcommand{\rd}{\, \mathrm{d}}
\newcommand{\EE}{\mathbb{E}}
\newcommand{\VV}{\mathbb{V}}
\newcommand{\RR}{\mathbb{R}}
\newcommand{\bspi}{\boldsymbol{\pi}}
\newcommand{\bsx}{\boldsymbol{x}}
\newcommand{\bsy}{\boldsymbol{y}}
\newcommand{\bsz}{\boldsymbol{z}}
\newcommand{\Fcal}{\mathcal{F}}
\newcommand{\Ucal}{\mathcal{U}}
\journal{}
\begin{document}

\begin{frontmatter}



\title{A simple algorithm for global sensitivity analysis with Shapley effects}

\author{Takashi Goda}

\address{School of Engineering, University of Tokyo, 7-3-1 Hongo, Bunkyo-ku, Tokyo 113-8656, Japan.}

\begin{abstract}
Global sensitivity analysis aims at measuring the relative importance of different variables or groups of variables for the variability of a quantity of interest. Among several sensitivity indices, so-called Shapley effects have recently gained popularity mainly because the Shapley effects for all the individual variables are summed up to the overall variance, which gives a better interpretability than the classical sensitivity indices called main effects and total effects. In this paper, assuming that all the input variables are independent, we introduce a quite simple Monte Carlo algorithm to estimate the Shapley effects for all the individual variables simultaneously, which drastically simplifies the existing algorithms proposed in the literature. We present a short Matlab implementation of our algorithm and show some numerical results. A possible extension to the case where the input variables are dependent is also discussed.
\end{abstract}


\begin{keyword}
Global sensitivity analysis \sep Shapley effect \sep Monte Carlo methods 
\end{keyword}

\end{frontmatter}


\section{Introduction}\label{sec:intro}
Global sensitivity analysis provides an indispensable framework in measuring the relative importance of different variables or groups of variables for the variability of a quantity of interest \cite{GSAPrimer,IL15}, and has long been considered one of the central problems in reliability engineering \cite{HS96,BATS03,HJSS06}. In particular, since his pioneering work by Sobol' \cite{Sobol93,Sobol01}, variance-based sensitivity analysis has been studied intensively and applied to a variety of subjects in science and engineering. Two classical but still major sensitivity indices are called \emph{main effect} and \emph{total effect}. The main effect, also called the first-order effect, measures the variance explained only by an input variable or a group of input variables, whereas the total effect is given by the overall variance minus the variance explained only by the complement variables. When looking at these effects for the individual variables, the sum of the main effects is always less than or equal to the overall variance, while the sum of the total effects is always larger than or equal to the overall variance. In this way the classical sensitivity indices have a difficulty in normalization, and may cause some trouble when judging whether one input variable is more important than another.

Recently the connection between variance-based global sensitivity analysis and Shapley value from game theory has been studied by Owen \cite{Owen14}. The resulting sensitivity index defined for individual variables is called \emph{Shapley effect}. The Shapley effect takes its value between the main effect and the total effect, and importantly, the sum of the Shapley effects over all the individual variables is exactly equal to the overall variance. This normalization property gives us a better interpretability in determining whether one input variable is more important than another. Moreover, the Shapley effect has been proven to remove the conceptual problem inherit to both the main and total effects when the input variables are correlated \cite{SNS16,OP17}.

In this paper we mainly focus on the case where the input variables are independent and study Monte Carlo algorithms to estimate the Shapley effects efficiently. Building upon the previous works by Song et al.\ \cite{SNS16} and Broto et al.\ \cite{BBD20}, which themselves build upon the idea from \cite{CGT09}, in conjunction with the famous pick-freeze scheme \cite{Sobol01,Saltelli02,SAACRT10,Owen13,JKLNP14,Goda17}, we introduce a quite simple Monte Carlo algorithm to estimate the Shapley effects for all the individual variables simultaneously at the cost of $(d+1)N$, where $d$ denotes the number of variables and $N$ denotes the sample size. Our algorithm offers the following advantages:
\begin{enumerate}
\item Each Shapley effect is estimated unbiasedly.
\item The variance of the estimator is also estimated unbiasedly and decays at the canonical $1/N$ rate under some assumption.
\item The above advantages lead to an approximate confidence interval for the Shapley effect without requiring independent Monte Carlo trials or bootstrap resampling.
\item The sum of the estimates for all the Shapley effects is an unbiased estimator of the overall variance.
\end{enumerate}
We also discuss how to extend our algorithm to the case where the input variables are dependent. Although we are particularly interested in Monte Carlo sampling-based approaches in this paper, note that there are numerous different approaches to  carry out a global sensitivity analysis proposed in the literature, such as random balance designs \cite{TGM06}, polynomial chaos expansion \cite{S08}, fast Fourier transform \cite{P10}, low-rank tensor approximation \cite{KS16} and random forests \cite{ALP21}. We refer the reader to \cite{PT17} for a recent review on various estimation algorithms for classical sensitivity indices.

The rest of this paper is organized as follows. In Section~\ref{sec:gsa_review}, we give an overview of variance-based global sensitivity analysis and introduce several sensitivity indices. In Section~\ref{sec:mc_estimate} we present our simple Monte Carlo algorithm to estimate the Shapley effects for all the input variables simultaneously. We also show some good properties of our algorithm as explained above. Numerical experiments in Section~\ref{sec:numerics} confirm the effectiveness of our algorithm for three test cases. We conclude the paper in Section~\ref{sec:extension} with discussing two possible extensions of our algorithm, one for an additional cost saving and the other for the case where the input variables are dependent. In \ref{app:matlab}, we provide a short Matlab implementation of our algorithm, which we use for one test case.

\section{Variance-based sensitivity analysis}\label{sec:gsa_review}

\subsection{ANOVA decomposition}\label{subsec:anova}
Let $\bsx=(x_1,\ldots,x_d)\in \Omega_1\times \cdots \times \Omega_d=:\Omega\subseteq \RR^d$ be a vector of input random variables. Each variable $x_j$ follows a probability distribution with density $\rho_j$ defined over the interval $\Omega_j\subseteq \RR$. Throughout this paper except Section~\ref{subsec:dependent}, we assume that all the random variables are independent with each other. Moreover, for simplicity of notation, we write $[1:d]:=\{1,\ldots,d\}$. For a subset $u\subseteq [1:d]$, we denote the complement of $u$ by $-u=[1:d]\setminus u$ and denote the cardinality of $u$ by $|u|$. For vectors $\bsx,\bsy\in \Omega$, we write $\bsx_u=(x_j)_{j\in u}$ and $(\bsx_u,\bsy_{-u})=\bsz$ with $z_j=x_j$ if $j\in u$ and $z_j=y_j$ otherwise. The Cartesian product $\prod_{j\in u}\Omega_j$ is denoted by $\Omega_u$, and the product of density functions $\prod_{j\in u}\rho_j(x_j)$ is denoted by $\rho_u(\bsx_u)$ with an exception for $u=[1:d]$ in which case we simply write $\rho(\bsx)$.

Now let $f\colon \Omega\to \RR$ be a function which outputs a quantity of our interest. If the variance of $f$ with respect to $\bsx$ is finite, $f$ can be decomposed as
\begin{align}\label{eq:anova1} f(\bsx)= \sum_{u\subseteq [1:d]}f_u(\bsx_u), \end{align}
where each summand is recursively defined by
\[ f_{\emptyset} = \int_{\Omega}f(\bsx)\rd \bsx =: \mu\]
and 
\[ f_{u}(\bsx_u) = \int_{\Omega_{-u}}f(\bsx)\rho_{-u}(\bsx_{-u})\rd \bsx_{-u}-\sum_{v\subsetneq u}f_v(\bsx_v) \]
for any non-empty subset $u$.
The following lemma shows important properties of this decomposition.
\begin{lemma}\label{lem:anova}
With the notation above, the following holds true.
\begin{enumerate}
\item For any non-empty subset $u$ and coordinate $j\in u$, we have
\[ \int_{\Omega_j}f_u(\bsx_u)\rho_j(x_j)\rd x_j = 0. \]
\item For any $u,v\subseteq [1:d]$, we have
\[ \int_{\Omega}f_u(\bsx_u)f_v(\bsx_v)\rho(\bsx)\rd \bsx = \begin{cases} \displaystyle \sigma_u^2:=\int_{\Omega_u}\left(f_{u}(\bsx_u)\right)^2 \rho_u(\bsx_u)\rd \bsx_u & \text{if $u=v$,}\\ 0 & \text{otherwise.}\end{cases} \]
\end{enumerate}
\end{lemma}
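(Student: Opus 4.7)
The plan is to prove the two items in order, obtaining the second as an easy consequence of the first.

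For item (1), I would proceed by induction on $|u|$. In the base case $|u|=1$, so $u=\{j\}$, the definition directly gives $f_{\{j\}}(x_j)=\int_{\Omega_{-\{j\}}}f(\bsx)\rho_{-\{j\}}(\bsx_{-\{j\}})\rd \bsx_{-\{j\}}-\mu$, and integrating against $\rho_j$ produces $\mu-\mu=0$ by Fubini. For the inductive step, fix $j\in u$ with $|u|\ge 2$ and integrate the recursive definition of $f_u$ against $\rho_j(x_j)$. The leading term becomes an integral of $f$ against $\rho_{-(u\setminus\{j\})}$, which by the definition of $f_{u\setminus\{j\}}$ equals $f_{u\setminus\{j\}}(\bsx_{u\setminus\{j\}})+\sum_{v\subsetneq u\setminus\{j\}}f_v(\bsx_v)$. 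In the remaining sum, split the $v\subsetneq u$ into those containing $j$ (whose contribution vanishes by the induction hypothesis) and those not containing $j$ (for which the integral in $x_j$ just returns $f_v(\bsx_v)$). Collecting these gives $\sum_{v\subseteq u\setminus\{j\}}f_v(\bsx_v)$, which cancels exactly with the contribution from the leading term.

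For item (2), consider first the case $u\neq v$ and, without loss of generality, pick $j\in u\setminus v$. Since $f_v(\bsx_v)$ does not depend on $x_j$ and $\rho$ factors as $\rho_j(x_j)\rho_{-\{j\}}(\bsx_{-\{j\}})$ by independence, Fubini lets us pull out the inner integral $\int_{\Omega_j}f_u(\bsx_u)\rho_j(x_j)\rd x_j$, which is zero by item (1). For $u=v$, the integrand depends only on $\bsx_u$, so integrating the remaining variables against $\rho_{-u}$ yields the stated $\sigma_u^2$.

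The only nontrivial step is the induction in item (1): one must carefully bookkeep the subsets of $u$ according to whether they contain $j$ and recognize that the ``leading'' integral is itself a sum over subsets of $u\setminus\{j\}$ by applying the defining recursion in reverse. Once this combinatorial cancellation is handled, everything else is routine Fubini together with independence of the $x_j$'s.
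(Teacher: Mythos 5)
Your proof is correct: the induction on $|u|$ for the mean-zero property, with proper subsets $v\subsetneq u$ split according to whether they contain $j$ (the former killed by the induction hypothesis, the latter surviving to cancel the leading term rewritten via the recursion for $f_{u\setminus\{j\}}$), goes through exactly as you describe, and the orthogonality in item (2) then follows from Fubini and the product form of $\rho$. The paper does not write out a proof but simply defers to the uniform-density case in \cite[Appendix~A.3]{Owenbook}, whose argument is precisely this one; your version carries it out directly for general product densities, so it matches the intended proof in substance.
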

\begin{proof}
We refer the reader to \cite[Appendix~A.3]{Owenbook} for the proof of the special case where $\rho_j$ is the uniform distribution over the unit interval $[0,1]$, which can be easily generalized to a proof of this lemma.
\end{proof}

It follows from the second assertion of Lemma~\ref{lem:anova} that the variance of $f$ can be decomposed as
\begin{align*}
\sigma^2 & := \int_{\Omega}\left(f(\bsx)-\mu\right)^2\rho(\bsx)\rd\bsx = \int_{\Omega}\left(\sum_{\emptyset \neq u\subseteq [1:d]}f_u(\bsx_u)\right)^2 \rho(\bsx)\rd\bsx \\
& \: = \sum_{\emptyset \neq u,v\subseteq [1:d]}\int_{\Omega}f_u(\bsx_u)f_v(\bsx_v) \rho(\bsx)\rd\bsx = \sum_{\emptyset \neq u\subseteq [1:d]}\sigma_u^2.
\end{align*}
This way, the \emph{overall variance} $\sigma^2$ is decomposed into $2^d-1$ terms with each term $\sigma_u^2$ being the variance of a lower-dimensional function $f_u$. This is why we call the decomposition \eqref{eq:anova1} the \emph{analysis of variance (ANOVA) decomposition} of $f$.

\subsection{Sensitivity indices}\label{subsec:index}
For a non-empty subset $u\subseteq [1:d]$, the main effect and the total effect for a group of input variables $\bsx_u$ are defined by
\[ \underline{\tau}^2_u := \sum_{\emptyset \neq v\subseteq u}\sigma_v^2 \quad \text{and}\quad \overline{\tau}^2_u := \sum_{\substack{\emptyset \neq v\subseteq [1:d]\\ v\cap u\neq \emptyset}}\sigma_v^2=\sigma^2-\underline{\tau}^2_{-u}, \]
respectively. We see from the definition that the main effect measures the variance explained by the variables $\bsx_u$. On the other hand, the total effect takes into account all possible interactions with the variables $\bsx_u$, so that we always have $\underline{\tau}^2_u\leq \overline{\tau}^2_u$. We see from the second equality that the total effect measures the difference between the overall variance and the variance explained by the complement variables $\bsx_{-u}$.

The following identities on these effects are well-known:
\[ \underline{\tau}^2_u = \int_{\Omega}\int_{\Omega}f(\bsx)f(\bsx_u,\bsy_{-u})\rho(\bsx)\rho(\bsy)\rd \bsx\rd \bsy -\mu^2, \]
and
\begin{align}\label{eq:identity_total}
 \overline{\tau}^2_u = \frac{1}{2}\int_{\Omega}\int_{\Omega}\left(f(\bsx)-f(\bsy_u,\bsx_{-u})\right)^2\rho(\bsx)\rho(\bsy)\rd \bsx\rd\bsy,
\end{align}
see, e.g., \cite[Appendix~A.6]{Owenbook}. These identities have been used for constructing so-called \emph{pick-freeze} Monte Carlo estimators of the main and total effects, respectively. We refer the reader to  \cite{Sobol01,SAACRT10,Owen13,JKLNP14} among many others.

Looking at these effects for an individual variable, we have
\[ \sum_{j=1}^{d}\underline{\tau}^2_{\{j\}}=\sum_{j=1}^{d}\sigma^2_{\{j\}} \leq \sigma^2, \]
and
\[ \sum_{j=1}^{d}\overline{\tau}^2_{\{j\}} = \sum_{\emptyset \neq u\subseteq [1:d]}|u|\sigma_u^2 \geq \sigma^2. \]
Therefore, the \emph{rescaled} versions of these effects, $\underline{\tau}^2_{\{j\}}/\sigma^2$ and $\overline{\tau}^2_{\{j\}}/\sigma^2$, are not summed up to 1. This normalization issue can be circumvented by employing a new class of sensitivity indices called Shapley effects \cite{Owen14}. For an individual variable $x_j$, the Shapley effect is defined as follows:
\begin{align}\label{eq:shapley_def} \phi_j := \frac{1}{d}\sum_{u\subseteq -\{j\}}\binom{d-1}{|u|}^{-1}\left( \underline{\tau}^2_{u+j}-\underline{\tau}^2_{u}\right) = \frac{1}{d}\sum_{u\subseteq -\{j\}}\binom{d-1}{|u|}^{-1}\left( \overline{\tau}^2_{u+j}-\overline{\tau}^2_{u}\right), \end{align}
where we write $\underline{\tau}^2_{\emptyset}=\overline{\tau}^2_{\emptyset}=0$. Here the second equality is proven in \cite[Theorem~1]{SNS16}. As pointed out in \cite{Owen14},
\[ \sum_{j=1}^{d}\phi_j = \sigma^2 \]
holds, so that the Shapley effect enables an easy interpretation when measuring the relative importance of individual variables. Moreover, it is known from \cite[Theorem~1]{Owen14} that
\[ \phi_j = \sum_{\substack{\emptyset \neq u\subseteq [1:d]\\ j\in u}}\frac{\sigma_u^2}{|u|}, \]
which leads to an inequality $\underline{\tau}^2_{\{j\}}\leq \phi_j\leq \overline{\tau}^2_{\{j\}}$.

\section{Monte Carlo estimator for Shapley effect}\label{sec:mc_estimate}

\subsection{Algorithm}
In this section, we introduce a simple unbiased Monte Carlo algorithm to estimate the Shapley effects $\phi_j$ for all $1\leq j\leq d$. Although we shall use the second expression of $\phi_j$ in \eqref{eq:shapley_def}, it is also possible to construct a similar Monte Carlo algorithm based on the first expression. For a start, let us consider a single value $\phi_j$. Let $\ell\in \{0,1,\ldots,d-1\}$ be a uniformly distributed discrete random variable, and given $\ell$, let $\Ucal_{j}(\ell)\subseteq -\{j\}$ be a uniformly distributed random subset with fixed cardinality $\ell$. Then it follows from \eqref{eq:shapley_def} and \eqref{eq:identity_total} that the Shapley effect for a variable $x_j$ is given by
\begin{align*}
 \phi_j & = \frac{1}{d}\sum_{\ell=0}^{d-1}\binom{d-1}{\ell}^{-1}\sum_{\substack{u\subseteq -\{j\}\\ |u|=\ell}}\left( \overline{\tau}^2_{u+j}-\overline{\tau}^2_{u}\right) \\
 & = \EE_{\ell}\EE_{\Ucal_{j}(\ell)}\left[ \overline{\tau}^2_{\Ucal_{j}(\ell)+j}-\overline{\tau}^2_{\Ucal_{j}(\ell)}\right] \\
 & = \EE_{\ell}\EE_{\Ucal_{j}(\ell)}\left[\int_{\Omega}\int_{\Omega}g_{\Ucal_{j}(\ell)}(\bsx,\bsy)\rho(\bsx)\rho(\bsy)\rd \bsx\rd\bsy\right] \\
 & = \int_{\Omega}\int_{\Omega}\EE_{\ell}\EE_{\Ucal_{j}(\ell)}\left[g_{\Ucal_{j}(\ell)}(\bsx,\bsy)\right]\rho(\bsx)\rho(\bsy) \rd \bsx\rd\bsy,
\end{align*} 
wherein we have defined
\begin{align*}
 g_{\Ucal_{j}(\ell)}(\bsx,\bsy) & = \frac{1}{2}\left(f(\bsx)-f(\bsy_{\Ucal_{j}(\ell)+j},\bsx_{-(\Ucal_{j}(\ell)+j)})\right)^2 - \frac{1}{2}\left(f(\bsx)-f(\bsy_{\Ucal_{j}(\ell)},\bsx_{-\Ucal_{j}(\ell)})\right)^2 \\
 & = \left(f(\bsx)-\frac{f(\bsy_{\Ucal_{j}(\ell)},\bsx_{-\Ucal_{j}(\ell)})+f(\bsy_{\Ucal_{j}(\ell)+j},\bsx_{-(\Ucal_{j}(\ell)+j)})}{2}\right) \\
 & \quad \times \left(f(\bsy_{\Ucal_{j}(\ell)},\bsx_{-\Ucal_{j}(\ell)})-f(\bsy_{\Ucal_{j}(\ell)+j},\bsx_{-(\Ucal_{j}(\ell)+j)})\right).
\end{align*}

This representation naturally leads to the following unbiased Monte Carlo estimator of $\phi_j$:
\[ \widehat{\phi}_{j,N}=\frac{1}{N}\sum_{n=1}^{N}g_{\Ucal^{(n)}_{j}(\ell^{(n)})}(\bsx^{(n)},\bsy^{(n)}), \]
where $\bsx^{(1)},\ldots,\bsx^{(N)}$ and $\bsy^{(1)},\ldots,\bsy^{(N)}$ are i.i.d.\ samples generated from the density $\rho$, $\ell^{(1)},\ldots,\ell^{(N)}$ are i.i.d.\ samples of $\ell\in \{0,1,\ldots,d-1\}$, and for each $n$, $\Ucal^{(n)}_{j}(\ell^{(n)})$ denotes a random sample of $\Ucal_{j}(\ell^{(n)})$ conditional on $\ell^{(n)}$. 

The key difference from the algorithms proposed in \cite{SNS16} and \cite{BBD20} is that we directly apply the pick-freeze scheme to estimate the difference $\overline{\tau}^2_{u+j}-\overline{\tau}^2_{u}$ instead of estimating two terms independently. This idea not only gives a tight coupling in estimating the difference, but also exploits a sensible recycling idea from \cite{SNS16} well further. In fact, if each of $\phi_1,\ldots, \phi_d$ is estimated independently, we need $3dN$ total function evaluations since computing each of $g_{\Ucal^{(n)}_{j}(\ell^{(n)})}(\bsx^{(n)},\bsy^{(n)})$ requires evaluating the function $f$ three times, whereas we can reduce this cost to $(d+1)N$ through the recycling technique by estimating all $\phi_1,\ldots, \phi_d$ simultaneously.

Let $\bspi =(\pi(1),\ldots,\pi(d))$ denote a random permutation of $[1:d]$. For a fixed $1\leq j\leq d$, generating $\ell\in \{0,1,\ldots,d-1\}$ and $\Ucal_{j}(\ell)$ randomly is equivalent to finding $\ell$ such that $\pi(\ell+1)=j$ and set $\Ucal_{j}(\ell)=\{\pi(1),\ldots,\pi(\ell)\}$. Note that we set $\Ucal_{j}(0)$ to the empty set. Because of this equivalence, by writing $\bspi_j=\{\pi(1),\ldots,\pi(\ell)\}$ with $\ell$ satisfying $\pi(\ell+1)=j$, we can rewrite our estimator as
\[ \widehat{\phi}_{j,N}=\frac{1}{N}\sum_{n=1}^{N}g_{\bspi^{(n)}_j}(\bsx^{(n)},\bsy^{(n)}), \]
for i.i.d.\ permutations $\bspi^{(1)},\ldots,\bspi^{(N)}$ of $[1:d]$. The recycling technique from \cite{SNS16} shares the same $\bspi^{(1)},\ldots,\bspi^{(N)}$ for computing all of $\widehat{\phi}_{1,N},\ldots,\widehat{\phi}_{d,N}$, which still ensures the unbiaseness of estimators. After some rearrangements, our algorithm can be summarized as follows.
\begin{algorithm}[Monte Carlo estimation of Shapley effects for all input variables]\label{alg}
Let $d$ be the number of input variables and $N$ be the sample size. Initialize $\widehat{\phi}_{1,N}=\cdots =\widehat{\phi}_{d,N}=0$. For $1\leq n\leq N$, do the following:
\begin{enumerate}
\item Generate $\bsx^{(n)},\bsy^{(n)}\in \Omega$ and $\bspi^{(n)}$ randomly.
\item Let $\ell=1$ and compute $\Fcal_n=\Fcal^{-}_n=f(\bsx^{(n)})$.
\item 
\begin{enumerate}
\item Compute $\Fcal^{+}_n=f(\bsy^{(n)}_{\{\pi(1),\ldots,\pi(\ell)\}}, \bsx^{(n)}_{-\{\pi(1),\ldots,\pi(\ell)\}})$.
\item Update
\[ \widehat{\phi}_{\pi(\ell),N} = \widehat{\phi}_{\pi(\ell),N}+\frac{1}{N}\left( \Fcal_n-\frac{\Fcal^{-}_n+\Fcal^{+}_n}{2}\right)\left( \Fcal^{-}_n-\Fcal^{+}_n\right). \]
\item Let $\ell=\ell+1$. If $\ell\leq d$, let $\Fcal^{-}_n=\Fcal^{+}_n$ and go to Step~(a). 
\end{enumerate}
\end{enumerate}
\end{algorithm}
\noindent It is obvious that we evaluate function values only $d+1$ times for each $n$, leading to the total computational cost of $(d+1)N$.

\begin{remark}
In Algorithm~\ref{alg}, instead of generating $\bsx^{(n)},\bsy^{(n)}$ randomly for all $n$, one can use Latin hypercube sampling or (randomized) quasi-Monte Carlo sampling. However, improving the convergence of the estimated Shapley effects seems quite hard. The reason is that we also need to generate a discrete object $\bspi^{(n)}$ and the integrand $g_{\bspi_j}(\bsx,\bsy)$ changes depending on the samples $\bsx^{(n)}$ and $\bsy^{(n)}$, so that the standard theory of quasi-Monte Carlo sampling which requires a smoothness of integrands does not apply. This problem does not occur for the main and total effects.
\end{remark}
\subsection{Some properties}
Because of the linearity of expectation, the first property on the unbiasedness of our Monte Carlo estimator, mentioned in Section~\ref{sec:intro}, can be easily shown as follows:
\[ \EE\left[\widehat{\phi}_{j,N}\right] =\frac{1}{N}\sum_{n=1}^{N}\EE\left[g_{\Ucal^{(n)}_{j}(\ell^{(n)})}(\bsx^{(n)},\bsy^{(n)})\right] = \EE_{\ell}\EE_{\Ucal_{j}(\ell)}\EE_{\bsx,\bsy\sim \rho}\left[g_{\Ucal_{j}(\ell)}(\bsx,\bsy)\right] = \phi_j.\]
This leads to the fourth property
\[ \EE\left[\sum_{j=1}^{d}\widehat{\phi}_{j,N}\right] = \sum_{j=1}^{d}\EE\left[\widehat{\phi}_{j,N}\right]=\sum_{j=1}^{d}\phi_j=\sigma^2. \]

We now show the second property of the estimator. The third property immediately follows from the central limit theorem. Let us make the second property more explicit.
\begin{theorem}\label{thm1}
Assume that 
\[ M_4(f):=\int_{\Omega}\left( f(\bsx)\right)^4 \rho(\bsx)\rd \bsx < \infty. \]
Then the following holds true.
\begin{enumerate}
\item The variance of $\widehat{\phi}_{j,N}$ is finite and decays at the rate of $1/N$ for all $j$.
\item The variance of $\widehat{\phi}_{j,N}$ is estimated unbiasedly by
\[ \frac{1}{N(N-1)}\sum_{n=1}^{N}\left(g_{\Ucal^{(n)}_{j}(\ell^{(n)})}(\bsx^{(n)},\bsy^{(n)})-\widehat{\phi}_{j,N}\right)^2. \]
\end{enumerate}
\end{theorem}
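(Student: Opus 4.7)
The plan is to exploit the i.i.d.\ structure of the summands defining $\widehat{\phi}_{j,N}$. For fixed $j$, the random variables $G_n:=g_{\Ucal^{(n)}_j(\ell^{(n)})}(\bsx^{(n)},\bsy^{(n)})$, $n=1,\ldots,N$, are i.i.d., since each depends only on the independent tuple $(\bsx^{(n)},\bsy^{(n)},\ell^{(n)},\Ucal^{(n)}_j(\ell^{(n)}))$. Once I show that $\EE[G_1^2]<\infty$, both conclusions reduce to standard sample-mean facts.

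For the second moment, write
\[ G_1=\tfrac{1}{2}(A^2-B^2),\quad A=f(\bsx)-f(\bsy_{\Ucal+j},\bsx_{-(\Ucal+j)}),\quad B=f(\bsx)-f(\bsy_{\Ucal},\bsx_{-\Ucal}), \]
using the shorthand $\Ucal=\Ucal^{(1)}_j(\ell^{(1)})$, $\bsx=\bsx^{(1)}$, $\bsy=\bsy^{(1)}$. Two applications of the elementary inequality $(a-b)^2\leq 2(a^2+b^2)$ give $G_1^2\leq \tfrac{1}{2}(A^4+B^4)$ and $A^4\leq 8\bigl(f(\bsx)^4+f(\bsy_{\Ucal+j},\bsx_{-(\Ucal+j)})^4\bigr)$, and similarly for $B^4$. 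The key distributional fact is that for \emph{every} deterministic subset $v\subseteq [1:d]$, the hybrid vector $(\bsy_v,\bsx_{-v})$ again has density $\rho$, because the coordinates are independent and $\rho$ factorises. Hence $\EE[f(\bsy_v,\bsx_{-v})^4]=M_4(f)$ independently of $v$; conditioning on $(\ell^{(1)},\Ucal)$ and applying this identity to $v=\Ucal$ and $v=\Ucal+j$ yields $\EE[G_1^2]\leq 16\,M_4(f)<\infty$.

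Claim~1 follows at once: by the i.i.d.\ structure, $\mathrm{Var}(\widehat{\phi}_{j,N})=\mathrm{Var}(G_1)/N$, which is finite and of order $1/N$. For Claim~2 I would invoke the classical fact that for any i.i.d.\ sequence with finite variance, $S_N^2:=\frac{1}{N-1}\sum_{n=1}^{N}(G_n-\bar G_N)^2$ is unbiased for $\mathrm{Var}(G_1)$; since $\bar G_N=\widehat{\phi}_{j,N}$, multiplying by $1/N$ gives precisely the proposed estimator, whose expectation equals $\mathrm{Var}(G_1)/N=\mathrm{Var}(\widehat{\phi}_{j,N})$. The proof of that classical identity is a one-line expansion using $\EE[G_nG_m]=(\EE G_1)^2$ for $n\neq m$.

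There is no real obstacle; the only point that deserves care is the distributional identity $(\bsy_v,\bsx_{-v})\sim \rho$, which rests squarely on the independence assumption introduced at the start of Section~\ref{subsec:anova}. Without independence the same moment bound would require a stronger integrability hypothesis on $f$ against the relevant conditional densities, which is presumably why the extension to dependent inputs is deferred to Section~\ref{sec:extension}.
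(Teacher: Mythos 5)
Your proposal is correct and follows essentially the same route as the paper: reduce to the variance of a single i.i.d.\ summand, bound its second moment by $16\,M_4(f)$ via two applications of the convexity inequality $(a-b)^2\leq 2(a^2+b^2)$ together with the fact that $(\bsy_v,\bsx_{-v})\sim\rho$ under independence, and treat the unbiasedness of the sample-variance estimator as a standard fact. No gaps.
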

\begin{proof}
Since the second assertion is well-known \cite[Chapter~2]{Owenbook}, we only give a proof for the first assertion. Because of the independence between different samples, we have
\begin{align*}
\VV\left[ \widehat{\phi}_{j,N}\right] = \frac{\VV\left[ g_{\Ucal_{j}(\ell)}(\bsx,\bsy)\right]}{N},
\end{align*}
where the variance on the right-hand side is taken with respect to $\bsx,\bsy\sim \rho$, $\ell\in \{0,1,\ldots,d-1\}$ and $\Ucal_{j}(\ell)\subseteq -\{j\}$. Thus it suffices to prove that $\VV\left[ g_{\Ucal_{j}(\ell)}(\bsx,\bsy)\right]$ is finite under the assumption $M_4(f)<\infty$. In fact, applying Jensen's inequality twice, we have
\begin{align*}
& \VV\left[ g_{\Ucal_{j}(\ell)}(\bsx,\bsy)\right] \leq \EE\left[ \left( g_{\Ucal_{j}(\ell)}(\bsx,\bsy) \right)^2\right] \\
& = \EE_{\ell}\EE_{\Ucal_{j}(\ell)}\left[ \int_{\Omega}\int_{\Omega}\left(g_{\Ucal_{j}(\ell)}(\bsx,\bsy)\right)^2 \rho(\bsx)\rho(\bsy)\rd \bsx\rd\bsy\right] \\
& = \frac{1}{d}\sum_{\ell=0}^{d-1}\binom{d-1}{\ell}^{-1}\sum_{\substack{u\subseteq -\{j\}\\ |u|=\ell}} \int_{\Omega}\int_{\Omega}\rho(\bsx)\rho(\bsy) \\
& \quad \quad \times \left(\frac{1}{2}\left(f(\bsx)-f(\bsy_{u+j},\bsx_{-(u+j)})\right)^2 - \frac{1}{2}\left(f(\bsx)-f(\bsy_{u},\bsx_{-u})\right)^2 \right)^2 \rd \bsx\rd\bsy \\
& \leq \frac{1}{d}\sum_{\ell=0}^{d-1}\binom{d-1}{\ell}^{-1}\sum_{\substack{u\subseteq -\{j\}\\ |u|=\ell}} \int_{\Omega}\int_{\Omega}\rho(\bsx)\rho(\bsy) \\
& \quad \quad \times \left(\frac{1}{2}\left(f(\bsx)-f(\bsy_{u+j},\bsx_{-(u+j)})\right)^4 + \frac{1}{2}\left(f(\bsx)-f(\bsy_{u},\bsx_{-u})\right)^4 \right) \rd \bsx\rd\bsy \\
& \leq \frac{1}{d}\sum_{\ell=0}^{d-1}\binom{d-1}{\ell}^{-1}\sum_{\substack{u\subseteq -\{j\}\\ |u|=\ell}} \int_{\Omega}\int_{\Omega}\rho(\bsx)\rho(\bsy) \\
& \quad \quad \times \left(8\left(f(\bsx)\right)^4+4\left(f(\bsy_{u+j},\bsx_{-(u+j)})\right)^4 + 4\left(f(\bsy_{u},\bsx_{-u})\right)^4 \right) \rd \bsx\rd\bsy \\
& = 16 M_4(f) < \infty.
\end{align*}
Thus we are done.
\end{proof}

\section{Numerical experiments}\label{sec:numerics}
We conduct some numerical experiments to confirm the effectiveness of our Shapley effect estimator and compare the result with the {\tt shapleyPermRand} estimator proposed in \cite{SNS16}, which is implemented in the {\tt sensitivity} package of the {\tt R} software \cite{R_sensitivity}. For the first test case, we consider Ishigami function which is a function of only three variables but is interesting in terms of interaction and non-linearity. For the second test case, we consider Sobol' $g$ function which has been often used as a standard test problem in the context of global sensitivity analysis in high dimensions \cite{GSAPrimer}. The third example is taken from \cite[Section~7]{SZ16} which involves the assessment of structural component strength. A short implementation of our algorithm in Matlab for the second test case is available in \ref{app:matlab}.

\subsection{Ishigami function}
Let us consider the case where $\rho$ is the uniform distribution over the domain $\Omega=[-\pi,\pi]^3$. Ishigami function is defined by
\[ f(x_1,x_2,x_3) = \left(1+bx_3^4\right)\sin x_1+a\left( \sin x_2\right)^2, \]
with some coefficients $a,b>0$. For this function, we see that
\begin{align*}
& \underline{\tau}^2_{\{1\}}=\frac{1}{2}\left(1+\frac{\pi^4b}{5}\right)^2,\quad \underline{\tau}^2_{\{2\}}=\frac{a^2}{8},\quad \underline{\tau}^2_{\{3\}}=0,\\
& \overline{\tau}^2_{\{1\}}=\underline{\tau}^2_{\{1\}}+\frac{8\pi^8 b^2}{225},\quad \overline{\tau}^2_{\{2\}}=\underline{\tau}^2_{\{2\}},\quad \overline{\tau}^2_{\{3\}}=\frac{8\pi^8 b^2}{225}, \\
& \phi_1=\underline{\tau}^2_{\{1\}}+\frac{4\pi^8 b^2}{225},\quad \phi_2=\underline{\tau}^2_{\{2\}},\quad \phi_3=\frac{4\pi^8 b^2}{225} .
\end{align*}
This means that the second variable $x_2$ does not have any interaction with the other variables and that the third variable $x_3$ affects the variation of $f$ only through an interaction with the first variable $x_1$.

In what follows, we choose $a=7$ and $b=0.1$. The Shapley effects for all the input variables are estimated according to Algorithm~\ref{alg}. As a comparison, they are also estimated by the {\tt shapleyPermRand} estimator with the choice of $N_v=2N$, $m=\lceil N/3\rceil$, $N_o=1$ and $N_i=3$ such that the number of the total function evaluations is almost equal to that of our proposed algorithm, which is $(d+1)N$. Although we refer the reader to \cite{R_sensitivity} for what each input parameter means, the choice ($m=\lceil N/3\rceil$, $N_o=1$ and $N_i=3$) is nearly optimal according to the theoretical analysis given in \cite[Claim~2]{SNS16}.

Figure~\ref{fig:ishigami_n14} shows the result for the case $N=2^{14}$. Not only the values of Shapley effects estimated by the two algorithms but also the exact values of the overall variance, main effects, Shapley effects and total effects are plotted. Besides, the 95\% confidence intervals for the Shapley effects are shown by error bars. It is clear that our proposed algorithm provides better estimates of the Shapley effects than the compared algorithm. In particular, the compared algorithm underestimates the Shapley effect for the second variable $x_2$ and the upper confidence limit is slightly larger than the exact value. Moreover, the confidence intervals for our proposed algorithm are narrower than those for the compared algorithm for all the input variables.

\begin{figure}[t]
    \centering
    \includegraphics[width=0.6\textwidth]{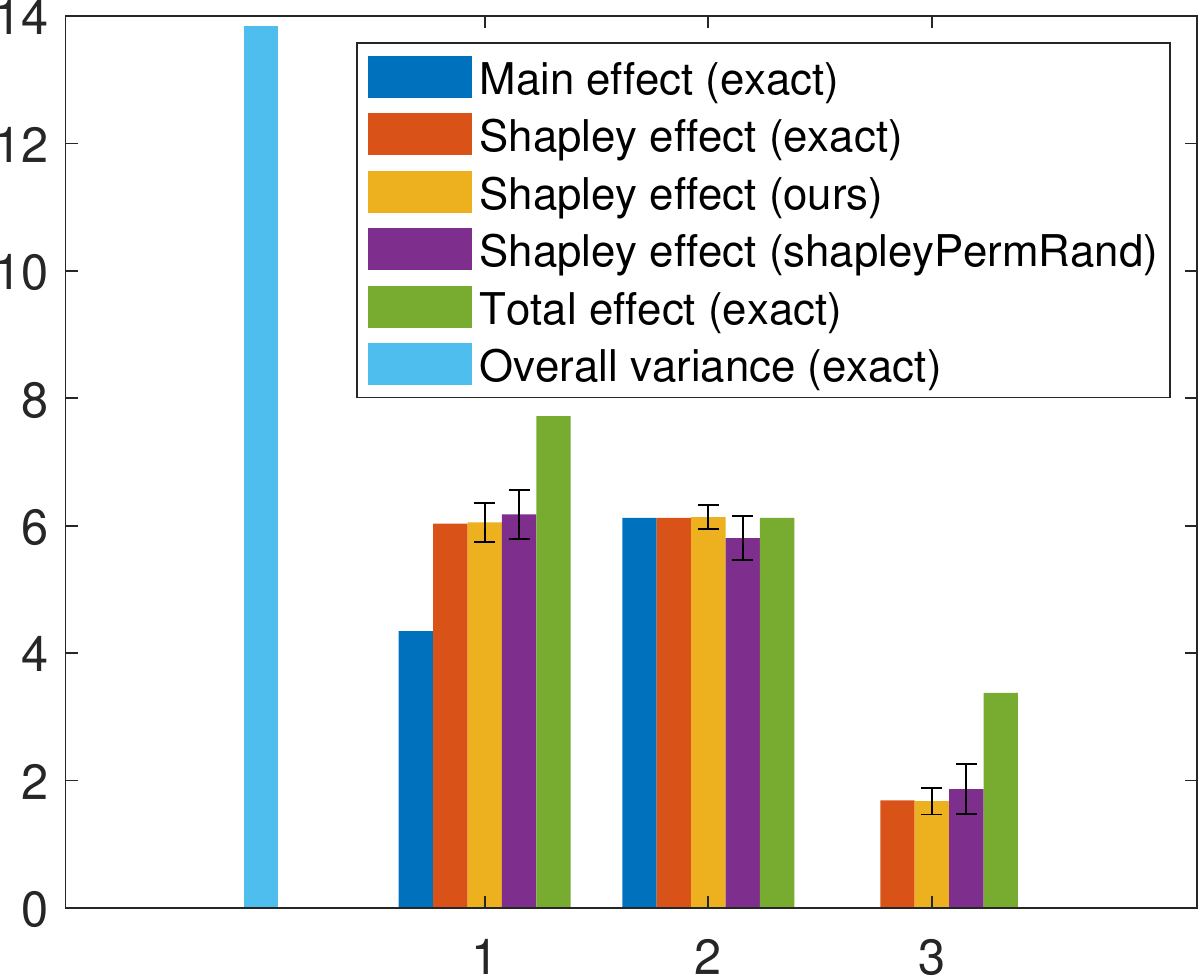}
    \caption{The estimated Shapley effects of individual variables for Ishigami function with $N=2^{14}$. The results obtained by our proposed algorithm are shown in yellow together with the confidence intervals, while those by the {\tt shapleyPermRand} estimator are in purple. The exact overall variance, main effects, Shapley effects and total effects are also shown for comparison.}
    \label{fig:ishigami_n14}
\end{figure}

In order to evaluate the errors from the exact values quantitatively, we run $R=10$ independent trials with a fixed $N$ and estimate the expected sum of squared errors (SSE) by
\begin{align}\label{eq:sse} 
\EE\left[ \sum_{j=1}^{d}\left(\widehat{\phi}_{j,N}-\phi_j\right)^2\right] \approx \frac{1}{R}\sum_{r=1}^{R}\sum_{j=1}^{d}\left(\widehat{\phi}_{j,N}^{(r)}-\phi_j\right)^2,
\end{align}
where $\widehat{\phi}_{j,N}^{(r)}$ denotes the estimated value of $\phi_j$ for the $r$-th trial. Figure~\ref{fig:ishigami_sse} compares the expected SSEs computed for the two algorithms with various values of $N$. The expected SSE for our proposed algorithm is smaller than that for the compared algorithm, which is consistent with the observation of Figure~\ref{fig:ishigami_n14} that the confidence interval for our proposed algorithm is narrower. The expected SSE decays at the rate almost of $1/N$, which supports the theoretical result given in Theorem~\ref{thm1}.

\begin{figure}[t]
    \centering
    \includegraphics[width=0.6\textwidth]{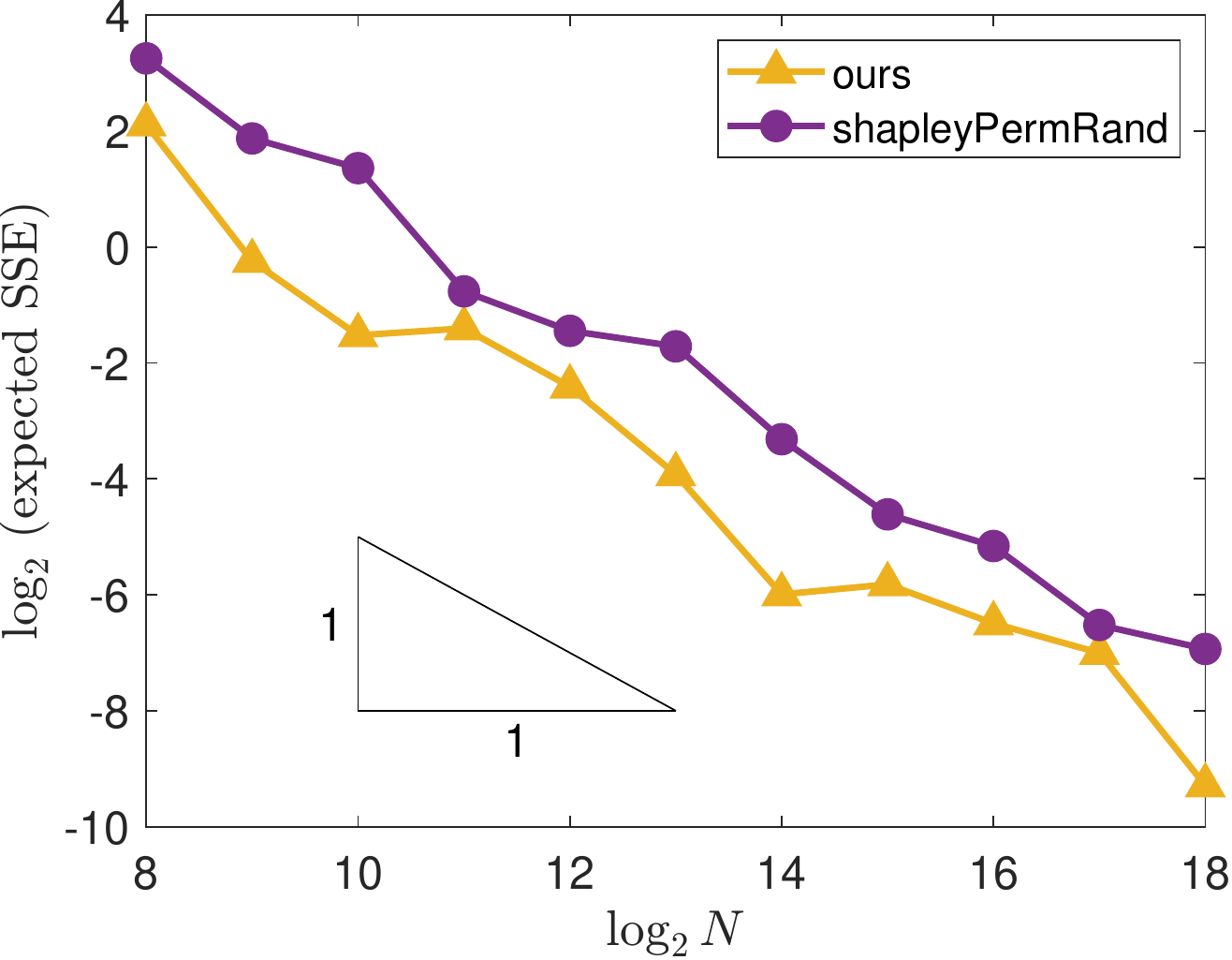}
    \caption{The expected sum of squared errors for Ishigami function as a function of $N$.}
    \label{fig:ishigami_sse}
\end{figure}

\subsection{Sobol' $g$ function}
As a high-dimensional example, let us consider the case where $\rho$ is the uniform distribution over the domain $\Omega=[0,1]^d$ for large $d$. With its simplest form, Sobol' $g$ function is defined by
\[ f(\bsx) = \prod_{j=1}^{d}\frac{|4x_j-2|+a_j}{1+a_j}, \]
with non-negative weight parameters $a_1,\ldots,a_d$. For this function, we can easily obtain
\[ \sigma^2 = \prod_{j=1}^{d}\left[ 1+\frac{1}{3(1+a_j)^2}\right]-1 \quad \text{and}\quad \sigma_u^2 = \prod_{j\in u}\frac{1}{3(1+a_j)^2}, \]
for any non-empty $u\subseteq [1:d]$. Thus, for an individual variable $x_j$, the main and total effects are given by
\[ \underline{\tau}^2_{\{j\}} = \frac{1}{3(1+a_j)^2}\quad \text{and}\quad \overline{\tau}^2_{\{j\}}=\frac{1}{3(1+a_j)^2}\times \prod_{\substack{\ell=1\\ \ell\neq j}}^{d}\left[ 1+\frac{1}{3(1+a_{\ell})^2}\right]\]
respectively. The Shapley effect is 
\[ \phi_j = \sum_{\substack{\emptyset \neq u\subseteq [1:d]\\ j\in u}}\frac{1}{|u|}\prod_{\ell\in u}\frac{1}{3(1+a_{\ell})^2}, \]
which seems hard to simplify further. 

\begin{figure}[t]
    \centering
    \includegraphics[width=0.6\textwidth]{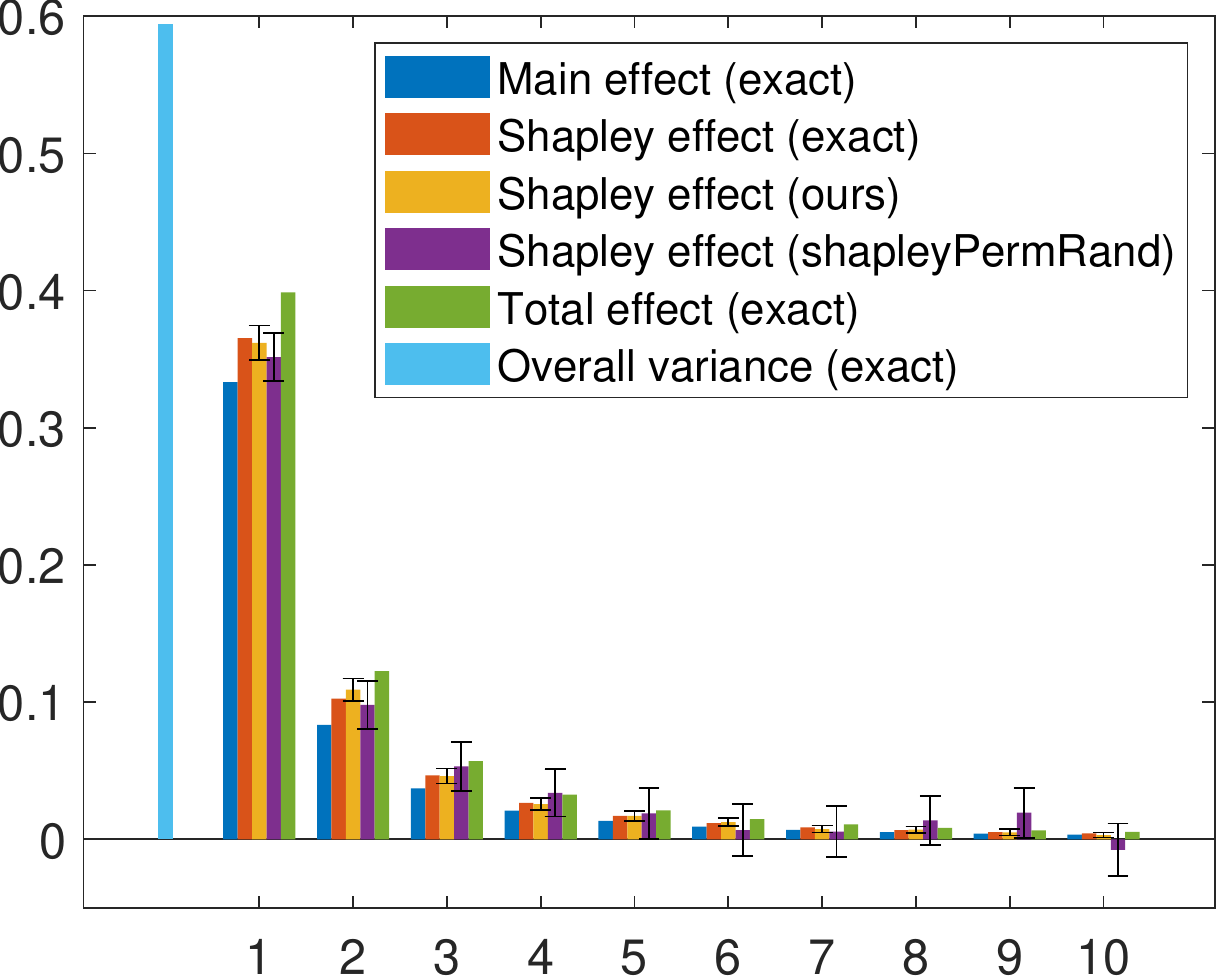}
    \caption{The estimated Shapley effects of individual variables for Sobol' $g$ function with $N=2^{14}$. The results obtained by our proposed algorithm are shown in yellow together with the confidence intervals, while those by the {\tt shapleyPermRand} estimator are in purple. The exact overall variance, main effects, Shapley effects and total effects are also shown for comparison.}
    \label{fig:sobolg_n14}
\end{figure}

In what follows, we set $d=10$ and $a_j=j-1$. With this choice of $a_j$, the relative importance of individual variables is given by the ascending order, i.e., $x_1$ is the most important, $x_2$ is the second most important, and so on. The Shapley effects for all the input variables are estimated according to Algorithm~\ref{alg} and the {\tt shapleyPermRand} estimator with the same choice of $N_v,m,N_o,N_i$ as considered in the first test case. The result with $N=2^{14}$ is shown in Figure~\ref{fig:sobolg_n14}. Similarly to Figure~\ref{fig:ishigami_n14}, we also plot the exact values of the overall variance, main effects, Shapley effects and total effects. We can see that all of the exact sensitivity effects decrease as the index increases. Here again, our proposed algorithm provides better estimates with narrower confidence intervals of the Shapley effects than the compared algorithm. The Shapley effect for the variable $x_{10}$, the least important variable, is estimated to be negative by the compared algorithm, although it must take a non-negative value by definition. Note that our proposed algorithm does not ensure non-negativity, so that such an erratic result may be obtained probabilistically particularly for small $N$. However, it can be confirmed from our result with $N=2^{14}$ that the estimated Shapley effects take the values between the corresponding main and total effects, which agrees with the theory. 

An interesting observation is that the confidence interval for our proposed algorithm gets narrower as the index increases, whereas it stays almost the same width for the compared algorithm. The crucial difference between the two algorithms is that our proposed algorithm directly estimates the difference $\overline{\tau}^2_{u+j}-\overline{\tau}^2_{u}$ instead of estimating $\overline{\tau}^2_{u}$ and $\overline{\tau}^2_{u+j}$ independently. If the variance of the function $[(f(\bsx)-f(\bsy_{u+j},\bsx_{-(u+j)}))^2-(f(\bsx)-f(\bsy_u,\bsx_{-u}))^2]/2$ is smaller than those of $(f(\bsx)-f(\bsy_{u+j},\bsx_{-(u+j)}))^2/2$ and $(f(\bsx)-f(\bsy_u,\bsx_{-u}))^2/2$, a direct estimation of the difference $\overline{\tau}^2_{u+j}-\overline{\tau}^2_{u}$ should be more accurate. This way our proposed algorithm avoids an unnecessary increment of the variance of the estimator, particularly for the input variables with small Shapley effects.

The expected SSEs are estimated by \eqref{eq:sse} with $R=10$ for the two algorithms. Figure~\ref{fig:sobolg_sse} compares the results with various values of $N$. The expected SSE for our proposed algorithm is one order of magnitude smaller than that for the compared algorithm and decays at the rate almost of $1/N$.

\begin{figure}[t]
\centering
    \includegraphics[width=0.6\textwidth]{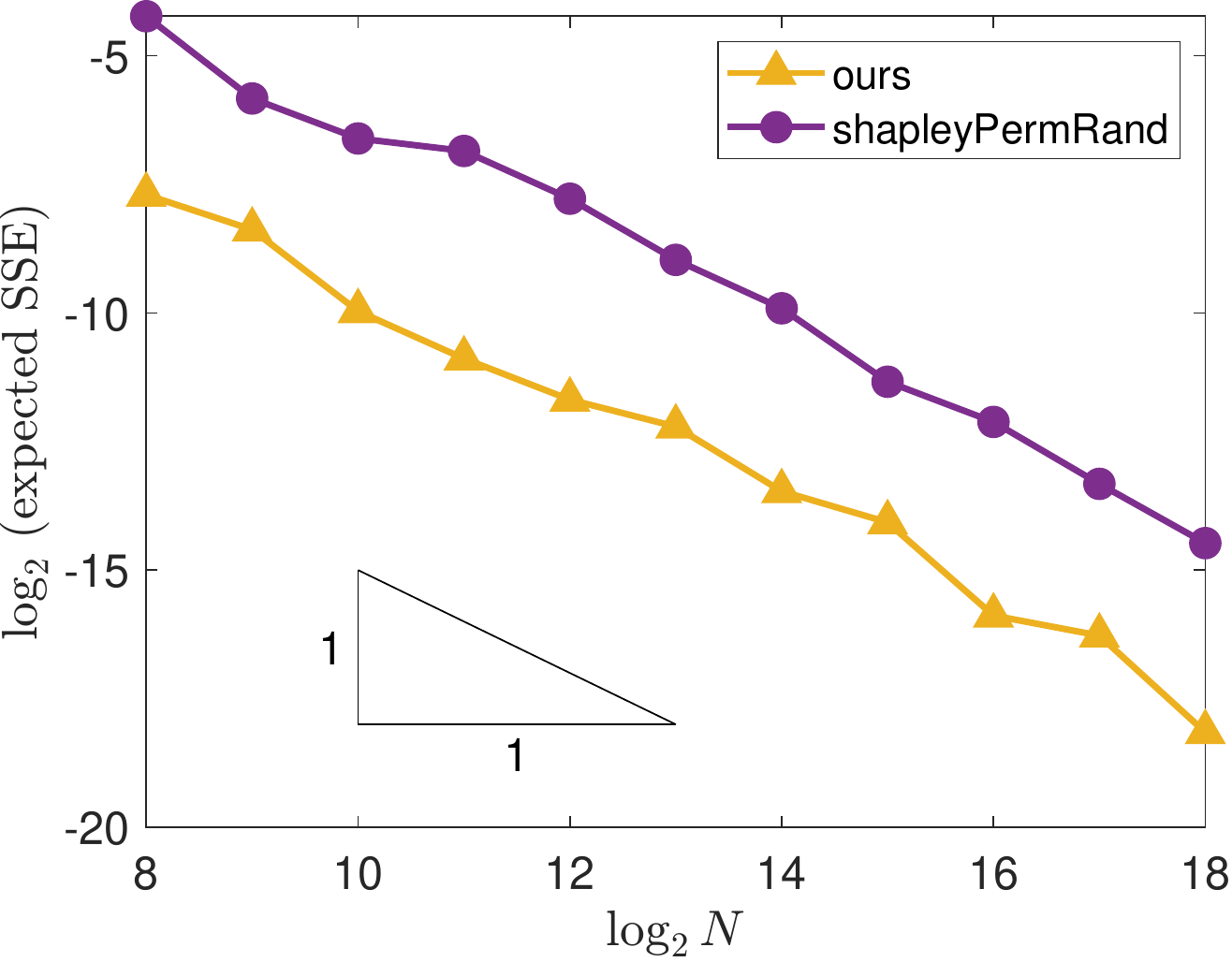}
\caption{The expected sum of squared errors for Sobol' $g$ function as a function of $N$.}
\label{fig:sobolg_sse}
\end{figure}

\subsection{Plate buckling}\label{subsec:plate}
\begin{table}[b]
\caption{List of input variables for plate buckling example. CV denotes the coefficient of variation.}\label{tbl:plate}
\centering
	\begin{tabular}{|l|l|l|l|l|}
		\hline
		variable & description & mean & CV & distribution type\\
		\hline \hline
		$x_1$ & width & 23.808 & 0.028 & normal \\ \hline
		$x_2$ & thickness & 0.525 & 0.044 & log-normal \\ \hline
		$x_3$ & yield stress & 44.2 & 0.1235 & log-normal \\ \hline
		$x_4$ & elastic modulus & 28623 & 0.076 & normal \\ \hline
		$x_5$ & initial deflection & 0.35 & 0.05 & normal \\ \hline
		$x_6$ & residual stress & 5.25 & 0.07 & normal \\ \hline
	\end{tabular}
\end{table}

Here we consider a more realistic example from structural engineering. As explained in \cite{SZ16}, let us consider the buckling strength of a rectangular plate that is supported on all four edges subjected to uniaxial compression. As described in Table~\ref{tbl:plate}, we have $d=6$ input random variables $x_1,\ldots,x_6$ which are related to the material, geometry and imperfection of the plate. The buckling strength is defined as a function of these variables, and is explicitly given by
\[ f(\bsx) = \left( \frac{2.1}{\lambda}-\frac{0.9}{\lambda^2}\right)\left(1-\frac{0.75x_5}{\lambda} \right)\left( 1-\frac{2x_2x_6}{x_1}\right)\qquad \text{with}\quad \lambda = \frac{x_1}{x_2}\sqrt{\frac{x_3}{x_4}}. \]

Since closed formulas for main and total effects are no longer available for this example, we estimate them for all the input variables by using the estimators proposed in the literature.  We use the one from \cite{Owen13} for the main effects, whereas the standard one based on the identity \eqref{eq:identity_total} is used for the total effects, see, e.g., \cite{SAACRT10}. The sample size is set large enough ($N=2^{24}$) to make sure that the estimates are converged sufficiently. The Shapley effects for all the input variables are estimated according to Algorithm~\ref{alg} and the {\tt shapleyPermRand} estimator with the same choice of $N_v,m,N_o,N_i$ as before. The results are shown in Figure~\ref{fig:sz_n14}. Interestingly, there is no clear difference between the main and total effects for all the input variables. This means that each variable does not have significant interactions with other variables for the buckling strength, so that the buckling strength can be effectively written as a sum of $d$ one-dimensional functions $f_{\{1\}},\ldots,f_{\{d\}}$.

\begin{figure}[t]
    \centering
    \includegraphics[width=0.6\textwidth]{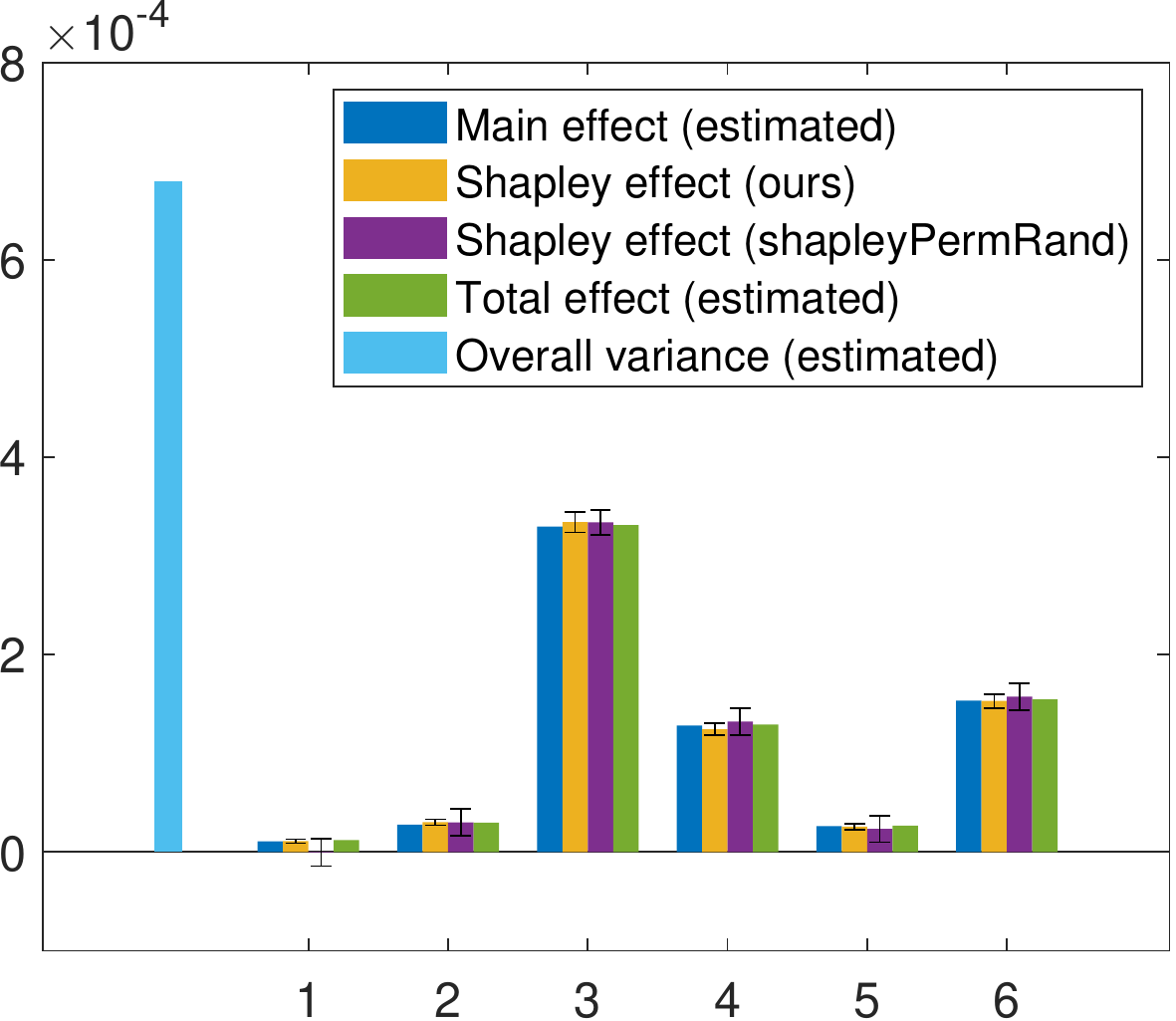}
    \caption{The estimated Shapley effects of individual variables for the plate buckling example with $N=2^{14}$. The results obtained by our proposed algorithm are shown in yellow together with the confidence intervals, while those by the {\tt shapleyPermRand} estimator are in purple. The estimated overall variance, main effects, and total effects are also shown for comparison.}
    \label{fig:sz_n14}
\end{figure}

As already pointed out, the Shapley effect takes its value between the corresponding main and the total effects for each input variable. Since the difference between the latter two values is quite small in this example, it might be challenging to estimate the Shapley effect in a way that the resulting value is bracketed by them. However, our proposed algorithm with $N=2^{14}$ already provides good estimates for all the variables. Similarly to the second test case, the confidence interval for our proposed algorithm is narrower for a variable with smaller Shapley effect, while it is not the case for the compared algorithm.

For this example, the expected SSE is estimated by
\[ \frac{1}{R-1}\sum_{r=1}^{R}\sum_{j=1}^{d}\left(\widehat{\phi}_{j,N}^{(r)}-\overline{\phi}_j\right)^2\quad \text{with}\quad \overline{\phi}_j=\frac{1}{R}\sum_{r=1}^{R}\widehat{\phi}_{j,N}^{(r)}.\]
Again we choose $R=10$. The results with various values of $N$ are compared in Figure~\ref{fig:sz_sse}. We can confirm that the expected SSE for our proposed algorithm is always smaller than that for the compared algorithm and decays at the rate almost of $1/N$.

\begin{figure}[t]
    \centering
    \includegraphics[width=0.6\textwidth]{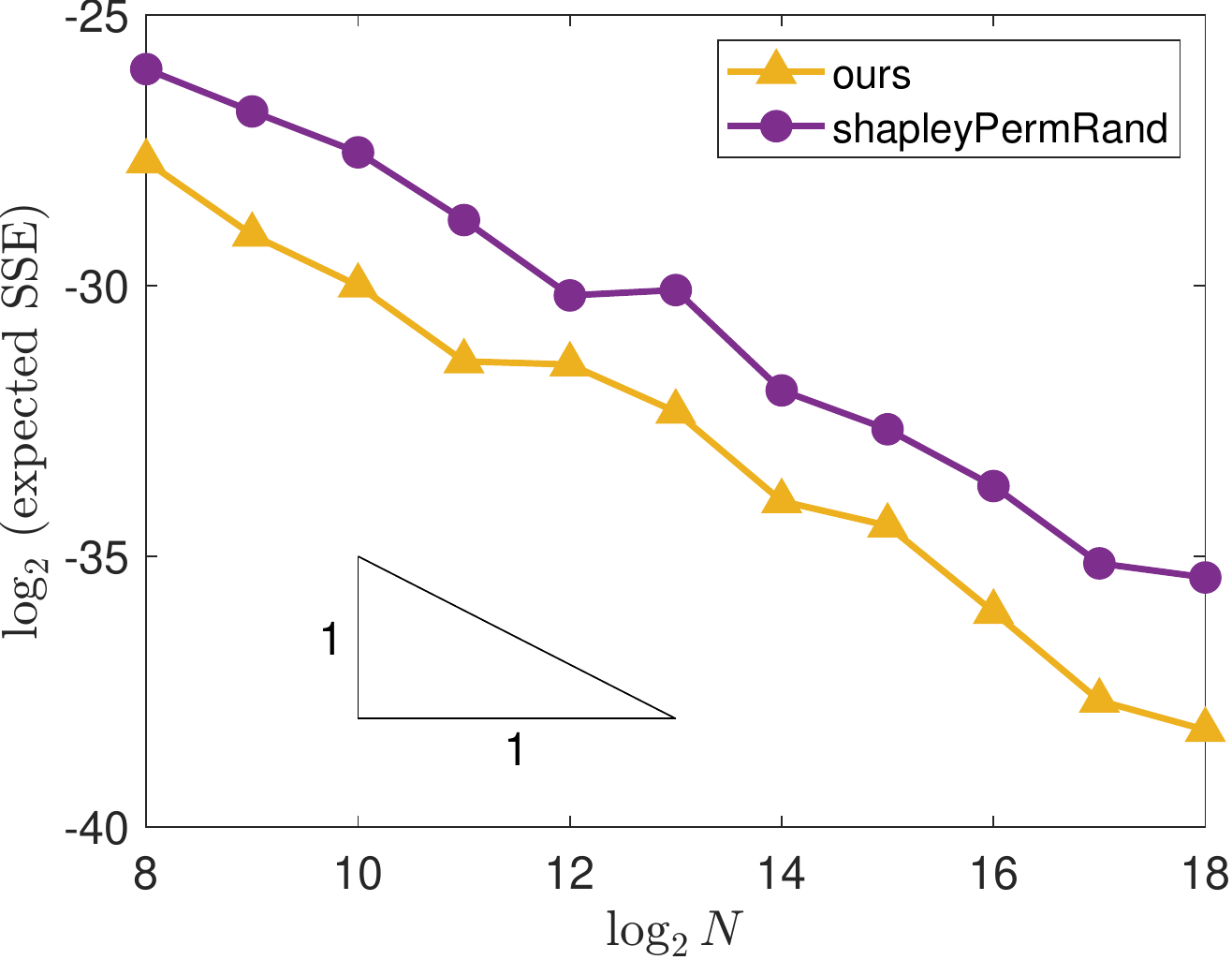}
    \caption{The expected sum of squared errors for the plate buckling example as a function of $N$.}
    \label{fig:sz_sse}
\end{figure}

\section{Possible extensions}\label{sec:extension}
Finally we discuss two possible extensions of our Monte Carlo estimator. The first extension allows for an additional cost reduction from $(d+1)N$ to $dN+1$ or $dN$, which might be useful when $d$ is small but $N$ is large. The second extension considers the case where the input variables are dependent.

\subsection{Additional cost saving}\label{subsec:saving}
In Algorithm~\ref{alg}, we generate two independent sequences for $\bsx$ and $\bsy$, and consider a pairing $(\bsx^{(n)},\bsy^{(n)})$ for $1\leq n\leq N$. Because of the linearity of expectation, however, the unbiasedness of our Monte Carlo estimator still holds even by generating only one random sequence for $\bsx$ and then considering a pairing $(\bsx^{(n)},\bsx^{(n+1)})$ for $1\leq n\leq N$ instead. This strategy is called the \emph{winding star} sampling \cite{JRD94} and enables to reduce the necessary cost from $(d+1)N$ to $dN+1$. The cost of $dN$ is possible by setting $\bsx^{(N+1)}=\bsx^{(1)}$, although the algorithm then becomes not extensible in $N$.

The variance of this lower-cost estimator is finite and decays at the $1/N$ rate under the same assumption given in Theorem~\ref{thm1}. However, because of the dependence between consecutive samples, an unbiased estimation of the variance of the estimator is not possible in the same way as the second item of Theorem~\ref{thm1}.

\subsection{Dependent input case}\label{subsec:dependent}
For $\bsx,\bsy\sim \rho$ and any subset $u\subseteq [1:d]$, it holds that $(\bsx_u,\bsy_{-u})\sim \rho$ as long as the input variables are independent. This property plays a fundamental role in establishing the simple concatenation approach for two independent samples $\bsx^{(n)}, \bsy^{(n)}$ in Algorithm~\ref{alg}. In fact, if the input variables are dependent, we generally have $(\bsx_u,\bsy_{-u})\not\sim \rho$, which prevents us from extending Algorithm~\ref{alg} to the dependent input case in a straightforward manner.

In order to adapt Algorithm~\ref{alg} to the dependent input variables, we need the following modifications:
\begin{enumerate}
\item In the first step, generate only $\bsx^{(n)}\in \Omega$ and $\bspi^{(n)}$ randomly.
\item In the third step, generate $\bsy^{(n)}_{\{\pi(1),\ldots,\pi(\ell)\}}$ conditional only on $\bsx^{(n)}_{-\{\pi(1),\ldots,\pi(\ell)\}}$ and independently of the existing ones $\bsy^{(n)}_{\{\pi(1)\}},\ldots,\bsy^{(n)}_{\{\pi(1),\ldots,\pi(\ell-1)\}}$ for all $1\leq \ell\leq d$.
\end{enumerate}
This way, we still have an unbiased Monte Carlo estimator for the Shapley effects with the cost of $(d+1)N$. However, the independence between $\bsy^{(n)}_{\{\pi(1),\ldots,\pi(\ell-1)\}}$ and $\bsy^{(n)}_{\{\pi(1),\ldots,\pi(\ell)\}}$ may lead to a large variance in estimating the difference $\overline{\tau}^2_{\{\pi(1),\ldots,\pi(\ell)\}}-\overline{\tau}^2_{\{\pi(1),\ldots,\pi(\ell-1)\}}$. Further detailed investigation is needed to address this issue.

\section*{Funding}
The work of T.G.\ was supported by JSPS KAKENHI Grant Number 20K03744.

\section*{Acknowledgments}
The author would like to thank the associate editor and the referees for helpful suggestions and remarks. He also thanks Elmar Plischke for pointing out the relevant literature and suggesting a way to simplify the Matlab code.


\appendix
\section{Matlab implementation}\label{app:matlab}
{\small
\begin{verbatim}
n = 2^14;                    % sample size
d = 10;                      % dimension
x = rand(n,d);               % Monte Carlo samples for x
y = rand(n,d);               % Monte Carlo samples for y
[~, pm] = sort(rand(n,d),2); % random permutation matrix

func = @(x,a)prod((abs(4.*x-2)+a)./(1+a),2); % Sobol g function
a = (0:d-1);

z = x;
fz1 = func(z,a);
fx = fz1;

phi1 = zeros(1,d); % initialization
phi2 = zeros(1,d); % initialization
for j=1:d
    ind = bsxfun(@eq,pm(:,j),1:d); % compare j-th column with 1:d
    z(ind) = y(ind);
    fz2 = func(z,a);
    fmarg = ((fx-fz1/2-fz2/2).*(fz1-fz2))';
    phi1 = phi1 + fmarg*ind/n;
    phi2 = phi2 + fmarg.^2*ind/n;
    fz1 = fz2;
end

s_all = sum(phi1);            % variance of function
phi2 = (phi2-phi1.^2)./(n-1); % variance of Shapley estimates

disp(phi1);                  % Shapley estimates
disp(phi1-1.96*sqrt(phi2));  % lower confidence limit
disp(phi1+1.96*sqrt(phi2));  % upper confidence limit
\end{verbatim}}


\begin{thebibliography}{99}
\bibitem{GSAPrimer} A. Saltelli, M. Ratto, T. Andres, F. Campolongo, J. Cariboni, D. Gatelli, M. Saisana and S. Tarantola, \emph{Global Sensitivity Analysis. The Primer}, John Wiley \& Sons, Ltd, 2008.
\bibitem{IL15} B. Iooss and P. Lema\^{i}tre, A review on global sensitivity analysis methods, In: G. Dellino and C. Meloni (eds.) Uncertainty Management in Simulation-Optimization of Complex Systems. Springer, Boston, 2015, pp.~101--122.
\bibitem{HS96} T. Homma and A. Saltelli, Importance measures in global sensitivity analysis of nonlinear models, Reliab. Eng. Syst. Safe. 52 (1996), 1--17.
\bibitem{BATS03} E. Borgonovo, G.~E. Apostolakis, S. Tarantola and A. Saltelli, Comparison of global sensitivity analysis techniques and importance measures in PSA, Reliab. Eng. Syst. Safe. 79 (2003), 175--185.
\bibitem{HJSS06} J.~C. Helton, J.~D. Johnson, C.~J. Sallaberry and C.~B. Storlie, Survey of sampling-based methods for uncertainty and sensitivity analysis, Reliab. Eng. Syst. Safe. 91 (2006), 1175--1209.
\bibitem{Sobol93} I.~M. Sobol', Sensitivity estimates for nonlinear mathematical models, Math. Model. Comput. Exper. 1 (1993), 407--414. 
\bibitem{Sobol01} I.~M. Sobol', Global sensitivity indices for nonlinear mathematical models and their Monte Carlo estimates, Math. Comput. Simulation 55 (2001), 271--280. 
\bibitem{Owen14} A.~B. Owen, Sobol' indices and Shapley values, SIAM/ASA J Uncertainty Quantification 2 (2014), 245--251. 
\bibitem{SNS16} E. Song, B.~L. Nelson and J. Staum, Shapley effects for global sensitivity analysis: theory and computation, SIAM/ASA J Uncertainty Quantification 4 (2016), 1060--1083. 
\bibitem{OP17} A.~B. Owen and C. Prieur, On Shapley value for measuring importance of dependent inputs, SIAM/ASA J Uncertainty Quantification 5 (2017), 986--1002. 
\bibitem{BBD20} B. Broto, F. Bachoc and M. Depecker, Variance reduction for estimation of Shapley effects and adaptation to unknown input distribution, SIAM/ASA J Uncertainty Quantification 8 (2020), 693--716. 
\bibitem{CGT09} J. Castro, D. G\'{o}mez and J. Tejada, Polynomial calculation of the Shapley value based on sampling, Comput. Oper. Res. 36 (2009), 1726--1730.
\bibitem{Saltelli02} A. Saltelli, Making best use of model evaluations to compute sensitivity indices, Comput. Phys. Comm. 145 (2002), 280--297.
\bibitem{SAACRT10} A. Saltelli, P. Annoni, I. Azzini, F. Campolongo, M. Ratto and S. Tarantola, Variance based sensitivity analysis of model output. Design and estimator for the total sensitivity index, Comput. Phys. Comm. 181 (2010), 259--270.
\bibitem{Owen13} A.~B. Owen, Better estimation of small Sobol' sensitivity indices, ACM Trans. Model. Comput. Simul. 23 (2013), Article~11. 
\bibitem{JKLNP14} A. Janon, T. Klein, A. Lagnoux, M. Nodet and C. Prieur, Asymptotic normality and efficiency of two Sobol index estimators, ESAIM: PS 18 (2014), 342--364. 
\bibitem{Goda17} T. Goda, Computing the variance of a conditional expectation via non-nested Monte Carlo, Oper. Res. Lett. 45 (2017), 63--67. 
\bibitem{TGM06} S. Tarantola, D. Gatelli and T.~A. Mara, Random balance designs for the estimation of first order global sensitivity indices, Reliab. Eng. Syst. Safe. 91 (2006), 717--727.
\bibitem{S08} B. Sudret, Global sensitivity analysis using polynomial chaos expansions, Reliab. Eng. Syst. Safe. 93 (2008), 964--979.
\bibitem{P10} E. Plischke, An effective algorithm for computing global sensitivity indices (EASI), Reliab. Eng. Syst. Safe. 95 (2010), 354--360.
\bibitem{KS16} K. Konakli and B. Sudret, Global sensitivity analysis using low-rank tensor approximations, Reliab. Eng. Syst. Safe. 156 (2016), 64--83.
\bibitem{ALP21} A. Antoniadis, S. Lambert-Lacroix and J.-M. Poggi, Random forests for global sensitivity analysis: A selective review, Reliab. Eng. Syst. Safe. 206 (2021), 107312.
\bibitem{PT17} C. Prieur and S. Tarantola, Variance-based sensitivity analysis: Theory and estimation algorithms, In: R. Ghanem, D. Higdon and H. Owhadi (eds.) Handbook of Uncertainty Quantification. Springer, Switzerland, 2017, pp.~1217--1239.
\bibitem{Owenbook} A.~B. Owen, \emph{Monte Carlo theory, methods and examples}, \url{http://statweb.stanford.edu/~owen/mc/} Last accessed 1 September 2020.
\bibitem{R_sensitivity} B. Iooss et al., Package `sensitivity', version~1.23.1, 2020.
\bibitem{SZ16} M.~D. Shields and J. Zhang, The generalization of Latin hypercube sampling, Reliab. Eng. Syst. Safe. 148 (2016), 96--108. 
\bibitem{JRD94} M.~J.~W. Jansen, W.~A.~H. Rossing and R.~A. Daamen, Monte Carlo estimation of uncertainty contributions from several independent multivariate sources, In: J. Grasman and G. van Straten (eds.) Predictability and Nonlinear Modelling in Natural Sciences and Economics. Springer, Dordrecht, 1994, pp.~334--343.
\end{thebibliography}
\end{document}